\def\SU{{\rm SU}}
\def\C{{\mathbb C}}
\def\diag{{\rm diag}}
\def\R{{\mathbb R}}
\def\Z{{\mathbb Z}}
\def\H{{\mathbb H}}
\def\U{{\rm U}}
\def\SO{{\rm SO}}
\def\GL{{\rm GL}}
\def\OO{{\rm O}}
\def\T{{\rm T}}
\def\HH{{\rm H}}
\def\Spin{{\rm Spin}}
\def\cl{{C}\!\ell}
\def\Mat{{\rm Mat}}
\def\cen{{\rm cen}}
\begin{document}

\markboth{Dmitry Shirokov}
{Calculation of Spin Group Elements Revisited}

%
\catchline{}{}{}{}{}
%

\title
{CALCULATION OF SPIN GROUP ELEMENTS REVISITED}

\author{DMITRY SHIROKOV}

\address{HSE University\\
Moscow, 101000, Russia\\
\email{dshirokov@hse.ru} }

\address{Institute for Information Transmission
Problems of Russian Academy of Sciences\\
Moscow, 127051, Russia\\
shirokov@iitp.ru }

\maketitle

\begin{history}
\received{(15 October 2024)}
\revised{(Day Month Year)}
\end{history}

\begin{abstract}
In this paper, we present a method for calculation of spin groups elements for known pseudo-orthogonal group elements with respect to the corresponding two-sheeted coverings. We present our results using the Clifford algebra formalism in the case of arbitrary dimension and signature, and then explicitly using matrices, quaternions, and split-quaternions in the cases of all possible signatures $(p,q)$ of space up to dimension $n=p+q=3$. The different formalisms are convenient for different possible applications in physics, engineering, and computer science.

\end{abstract}

\keywords{Spin group; orthogonal group; pseudo-orthogonal group; Clifford algebra; geometric algebra; two-sheeted covering.}

\section{Introduction}	

Two-sheeted coverings of (pseudo-)orthogonal groups by spin groups are widely used in various applications. The relation between a fixed (pseudo-)orthogonal matrix and the corresponding pair of spin group elements is nonlinear. The problem arises of how to calculate, for a known (pseudo-)orthogonal matrix, the corresponding pair of spin group elements. The natural language describing this relation is the Clifford algebra formalism, but for many applications the matrix formalism is preferable. In this paper, we give a complete answer to this problem, firstly, using the Clifford algebra formalism in the case of arbitrary dimension and signature and, secondary, using the matrix, quaternion, and split-quaternion formalisms in the cases of all possible signatures $(p,q)$ of space up to dimension $n=p+q=3$. For known methods in the Clifford algebra (or geometric algebra) formalism for the cases $n=3, 4$, we refer the reader to \cite{Hestenes, Doran, Lounesto}. In this paper, we generalize the results of \cite{Shirokovspin, Shirokovspin2} and present a new explicit method that works in the case of arbitrary dimension $n=p+q\geq 1$ and for an arbitrary fixed (pseudo-)orthogonal matrix.

The paper is organized as follows. In Section \ref{sectionClifford}, we present new results on the calculation of spin group elements using the Clifford algebra formalism. In Section \ref{sectionMatrix}, we present new results using the matrix, quaternion, and split-quaternion formalisms. Conclusions follow in Section~\ref{Conclusions}.

\section{Clifford Algebra Formalism}\label{sectionClifford}

Let us consider the real Clifford algebra (or geometric algebra) $\cl_{p,q}$ \cite{Hestenes,Doran,Lounesto,Shirokovspin2} with the identity element $e\equiv 1$ and the generators $e_a$, $a=1, 2, \ldots, n=p+q$. The generators satisfy the relations
\begin{eqnarray}
e_a e_b+e_b e_a=2\eta_{ab}e,\qquad \eta=(\eta_{ab})=\diag(\underbrace{1, \ldots , 1}_p, \underbrace{-1, \ldots, -1}_{q}).\label{eta}
\end{eqnarray}
In Euclidean cases, the notation $\cl_n:=\cl_{n, 0}$ is used. We use ordered multi-indices $A$ of length from $0$ to $n$ to denote $2^n$ fixed basis elements of $\cl_{p,q}$:
\begin{eqnarray}
e_A=e_{a_1 \ldots a_k}=e_{a_1}e_{a_2}\cdots e_{a_k},\qquad a_1 < a_2 < \cdots < a_k.
\end{eqnarray}
In particular, the identity element $e_\emptyset=e$ has an empty multi-index $A=\emptyset$ with zero length $|A|=0$. An arbitrary element (multivector) $U\in\cl_{p,q}$ has the form
\begin{eqnarray}
U&=&ue+\sum_{a=1}^n u_a e_a+\sum_{a<b} u_{ab}e_{ab}+\cdots+u_{1\ldots n}e_{1\ldots n}\\
&=&\sum_{k=0}^n\sum_{a_1<\cdots<a_k}u_{a_1 \ldots a_k} e_{a_1 \ldots a_k},\qquad u_A\in\R.
\end{eqnarray}
The Clifford algebra $\cl_{p,q}$ can be represented as the direct sum
\begin{eqnarray}
\cl_{p,q}=\bigoplus_{k=0}^n \cl^k_{p,q},\quad \cl^k_{p,q}=\{\sum_{a_1<\cdots<a_k}u_{a_1 \ldots a_k} e_{a_1 \ldots a_k}\},\quad \dim (\cl_{p,q}^k)=C_n^k,
\end{eqnarray}
where $\cl_{p,q}^k$ are called subspaces of grades $k=0, 1, \ldots, n$. We denote the operation of projection onto $\cl^k_{p,q}$ by $\langle U \rangle_k$. We need the even and odd subspaces of $\cl_{p,q}$ with the following definitions:
\begin{eqnarray}
\cl^{(0)}_{p,q}:=\bigoplus_{k=0\mod 2} \cl^k_{p,q},\qquad \cl^{(1)}_{p,q}:=\bigoplus_{k=1\mod 2}\cl^k_{p,q}.
\end{eqnarray}
We use the following standard operation of reversion in $\cl_{p,q}$
\begin{eqnarray}
\widetilde{U}:=\sum_{k=0}^n(-1)^{\frac{k(k-1)}{2}} \langle U \rangle_k
\end{eqnarray}
with the property
\begin{eqnarray}
\widetilde{UV}=\widetilde{V}\,\widetilde{U},\qquad \forall U, V\in\cl_{p,q}.
\end{eqnarray}
Let us consider the spin groups
\begin{eqnarray}
\Spin_+(p,q):=\{T\in\cl^{(0)}_{p,q}\cup\cl^{(1)}_{p,q} \,|\, T^{-1} \cl^1_{p,q}T\subset \cl^1_{p,q},\quad \widetilde{T}T= e\}.\label{spin}
\end{eqnarray}
and (pseudo-orthogonal) special orthochronous groups 
\begin{eqnarray}
\SO_+(p,q):=\{P\in\Mat(n,\R)\,|\,P^\T \eta P = \eta,\quad \det(P)=1,\quad p^{1\ldots p}_{1\ldots p}\geq 1\}.
\end{eqnarray}
Here and below we use the notation $p^A_B$ with ordered multi-indices $A=a_1 \ldots a_k$ and $B=b_1 \ldots b_k$ to denote the minor of the matrix $P$ with matrix entries from intersection of rows $a_1$, \ldots, $a_k$ and columns $b_1$, \ldots, $b_k$. For example, $p^{1\ldots p}_{1\ldots p}$ is the minor from the upper left corner of the matrix $P$ of size $p$. In the case of empty multi-indices $A$ and $B$, the corresponding minor $p^\emptyset_\emptyset$ is equal to $1$ by definition. The groups $\SO_+(p,q)$ are connected components of the identity of the pseudo-orthogonal groups 
\begin{eqnarray}
\OO(p,q):=\{P\in\Mat(n,\R)\,|\,P^\T \eta P = \eta\}.
\end{eqnarray}
For the cases of Euclidean signatures, we have $\SO(n)=\SO_+(n,0)$, $\OO(n)=\OO(n,0)$, and $\Spin(n)=\Spin_+(n,0)$.

We have two-sheeted coverings of the special orthochronous groups $\SO_+(p,q)$ by the spin groups $\Spin_+(p,q)$. From the algebraic point of view, this means that for an arbitrary matrix $P=(p_a^b)\in\SO_+(p,q)$ there are two elements $S=\pm T\in\Spin_+(p,q)$ such that 
\begin{eqnarray}
Se_aS^{-1}=p_a^b e_b,\label{conn}
\end{eqnarray}
where we have the sum over index $b$ from $1$ to $n$ (here and below we use the Einstein convention). 

If we know the elements $S=\pm T\in\Spin_+(p,q)$, then it is easy to find the corresponding matrix $P\in\SO_+(p,q)$ from (\ref{conn}). The problem of finding $P\in\SO_+(p,q)$ for known $S=\pm T\in\Spin_+(p,q)$ is more difficult. The following theorem gives an answer to this problem in the case of arbitrary $n=p+q$.

We use the notation 
\begin{eqnarray}
e^A:=(e_A)^{-1}
\end{eqnarray}
for the inverse of basis elements of $\cl_{p,q}$. Note that we use the Einstein convention for ordered multi-indices too. We have the sum over all ordered multi-indices $A$ and $B$ of the same length $|A|=|B|$ in (\ref{MMF}).

Combining the methods from \cite{Shirokovspin} and \cite{Shirokovspin2}, we get the following new result. Note that \cite{Shirokovspin2} presents a particular case of the following theorem (only an explicit formula with $M_\emptyset$ is presented, which does not work in the general case).

\begin{theorem}\label{thnew}
Let $P\in\SO_+(p,q)$, $p+q=n$. We can always choose 
a basis element $e_F\in\{e_D \, | \, |D|=0\mod 2\}\in\cl_{p,q}$ such that
\begin{eqnarray}
M_F:=p^B_A e_B e_F e^A \neq 0.\label{MMF}
\end{eqnarray}
Next, we can find elements $S=\pm T\in\Spin_+(p,q)$ that correspond to $P=(p_a^b)\in\SO_+(p,q)$ as the two-sheeted covering $Se_aS^{-1}=p_a^b e_b$ in the following way:
\begin{eqnarray}
S=\pm \frac{M_F}{\sqrt{\widetilde{M_F}M_F}}.\label{SSF}
\end{eqnarray}
\end{theorem}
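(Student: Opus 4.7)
The plan is to exploit the fact that the covering relation $Se_aS^{-1}=p_a^b e_b$ extends to ordered multi-indices as $Se_AS^{-1}=p^B_A e_B$ (summing over $B$ with $|B|=|A|$), with the same minors $p^B_A$ appearing on both sides. I would first establish this extension by iterating the degree-one relation: the product $Se_aS^{-1}\cdot Se_bS^{-1}$ reproduces the $2\times 2$ minors of $P$ because the symmetric piece $\sum_c p_a^c p_b^c\eta_{cc}=(P^\T\eta P)_{ab}=\eta_{ab}$ vanishes for $a<b$, and the general case follows by induction on $|A|$.

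With this in hand, I would rewrite
\[ M_F=\sum_A(Se_AS^{-1})\,e_F\,e^A=S\,\phi(S^{-1}e_F),\qquad \phi(Y):=\sum_A e_A Y e^A, \]
so that $M_F$ equals $S$ times the averaging operator $\phi$ evaluated at $S^{-1}e_F$. The key step is to compute $\phi$ on a basis element: using the Clifford commutation identity $e_A e_C e^A=(-1)^{|A||C|+|A\cap C|}e_C$ together with a short generating-function argument based on expanding $(1-x)^{|C|}(1+x)^{n-|C|}$, one finds $\phi(e_C)=2^n e_C$ when $e_C$ is central (namely $C=\emptyset$ always, or $C=\{1,\ldots,n\}$ when $n$ is odd) and $\phi(e_C)=0$ otherwise. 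Hence $\phi$ is $2^n$ times the projection of $\cl_{p,q}$ onto its center.

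Since $P\in\SO_+(p,q)$ lifts only to the identity component of $\Spin_+(p,q)$, which sits inside $\cl^{(0)}_{p,q}$, and since $|F|$ is chosen even, the element $S^{-1}e_F$ has even grade and therefore no pseudoscalar component in any signature. It follows that
\[ \phi(S^{-1}e_F)=2^n\langle S^{-1}e_F\rangle_0=:c_F\in\R,\qquad M_F=c_F\,S. \]
Existence of an even $F$ with $c_F\neq 0$ is then immediate: the linear functional $X\mapsto\langle S^{-1}X\rangle_0$ on $\cl^{(0)}_{p,q}$ takes the value $1$ at $X=S$, so it cannot vanish on the whole basis of $\cl^{(0)}_{p,q}$ consisting of even $e_F$'s. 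Finally, $\widetilde{M_F}M_F=c_F^2\,\widetilde{S}S=c_F^2\,e$ yields $c_F=\pm\sqrt{\widetilde{M_F}M_F}$ and hence~(\ref{SSF}).

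The main obstacle is proving the center-projection identity for $\phi$ with the correct signs across all signatures; once it is in hand, the remaining steps are algebraic bookkeeping. A secondary subtlety is confirming that $P\in\SO_+(p,q)$ lifts only to the even part of $\Spin_+(p,q)$, which is exactly what excludes a stray pseudoscalar contribution from $\phi(S^{-1}e_F)$ when $n$ is odd.
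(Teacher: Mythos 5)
Your proposal is correct, and its computational core coincides with the paper's own proof: you extend the covering relation to multi-indices ($Se_AS^{-1}=p^B_Ae_B$, which the paper quotes from an earlier reference and you sketch via the $2\times2$ case plus induction), recognize $M_F=S\,\phi(S^{-1}e_F)$ with $\phi(U)=e_AUe^A=2^n\langle U\rangle_{\cen}$, use evenness of $S$ and $e_F$ to drop the pseudoscalar part so that $M_F=2^n\langle S^{-1}e_F\rangle_0\,S$, and then normalize with $\widetilde{S}S=e$ exactly as in the paper. The genuine difference is the existence half: the paper proves that some even $e_F$ gives $M_F\neq 0$ by citing external results (\cite{Shirokovspin}, Theorem 4 of \cite{Dev}), whereas you get it for free from your own identity, observing that the linear functional $X\mapsto\langle S^{-1}X\rangle_0$ equals $1$ at $X=S\in\cl^{(0)}_{p,q}$ and hence cannot vanish on every even basis blade; together with your direct computation of $\phi$ on blades (the sign identity $e_Ae_Ce^A=(-1)^{|A||C|+|A\cap C|}e_C$ and the binomial count), this makes the whole theorem self-contained, which is a real gain over the paper's reference-dependent argument. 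One small imprecision: your claim that $P\in\SO_+(p,q)$ ``lifts only to'' even elements is not literally true under the paper's definition (\ref{spin}), which admits odd elements --- for odd $n$ and suitable signature, e.g. $\cl_{3,0}$, $Te_{123}$ is an odd element covering the same $P$ under the untwisted adjoint action; but your argument only needs the standard fact that an even lift with $\widetilde{S}S=e$ exists (which the paper also assumes), so this does not affect correctness.
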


\begin{proof} Multiplying (\ref{conn}) by itself several times, we get (see \cite{Shirokovspin2} for details)
\begin{eqnarray}
Se_AS^{-1}=p_A^B e_B.\label{multi}
\end{eqnarray}
Multiplying both sides of (\ref{multi}) on the right by $e_F$ and $e^A:=(e_A)^{-1}$, we get
\begin{eqnarray}
Se_AS^{-1}e_F e^A=p_A^B e_B e_F e^A,\label{tyu}
\end{eqnarray}
where we have the sum over multi-indices $A$ and $B$ of the same length $|A|=|B|$. 

Let us denote the right-hand side of (\ref{tyu}) by $M_F$. The fact that we can always find an element $e_F\in\{e_D \, | \, |D|=0\mod 2\}$ such that $M_F\neq 0$ follows from the results of \cite{Shirokovspin} (see also Theorem 4 in \cite{Dev}).

We have 
$$e_A U e^A=2^n\langle U \rangle_{\cen},$$
where we denote by $\langle U\rangle_{\cen}$ the projection of $U\in\cl_{p,q}$ onto the center $\cen(\cl_{p,q})$ of $\cl_{p,q}$, which is $\cl^0_{p,q}$ in the case of even $n$ and $\cl^0_{p,q}\oplus\cl^n_{p,q}$ in the case of odd $n$. Since $S, e_F\in\cl^{(0)}_{p,q}$, we obtain $\langle S^{-1} e_F \rangle_{\cen}=\langle S^{-1}e_F \rangle_0$. 

From (\ref{tyu}), we get
\begin{eqnarray}
2^n S \langle S^{-1}e_F \rangle_{0}=M_F.\label{g2}
\end{eqnarray}
Taking reversion of both sides (\ref{g2}), we get
\begin{eqnarray}
2^n \langle S^{-1}e_F \rangle_{0} \widetilde{S}=\widetilde{M_F}.\label{g1}
\end{eqnarray}
Multiplying both sides of (\ref{g1}) by both sides of (\ref{g2}), we obtain
\begin{eqnarray}
(2^n)^2  \langle S^{-1}e_F \rangle_{0}  \widetilde{S}S \langle S^{-1}e_F\rangle_{0}=\widetilde{M_F} M_F.
\end{eqnarray}
Taking into account $\widetilde{S}S=e$, we get
\begin{eqnarray}
(2^n  \langle S^{-1}e_F \rangle_{0})^2=  \widetilde{M_F} M_F.
\end{eqnarray}
Taking the square root and substituting this expression into (\ref{g2}), we obtain the statement of the theorem.
\end{proof}

In terms of frames and rotors of geometric algebra, we can reformulate this theorem in the following way.

\begin{corollary} Let us have two frames $e_a$ and $\beta_a$ related by the rotation
$$
Se_a \widetilde{S}=\beta_a,\qquad \widetilde{S}=S^{-1}.
$$
We can always choose an element $e_F\in\{e_A \, | \, |A|=0\mod 2\}\in\cl_{p,q}$ such that
\begin{eqnarray}
M_F:=\beta_A e_F e^A \neq 0.\label{MMFG}
\end{eqnarray}
Next, we can find the rotors
\begin{eqnarray}
S=\pm \frac{M_F}{\sqrt{\widetilde{M_F}M_F}}.\label{SSFG}
\end{eqnarray}
\end{corollary}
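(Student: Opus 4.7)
The plan is to reduce the corollary to Theorem \ref{thnew} by recognizing that the frame relation $Se_a\widetilde{S}=\beta_a$ together with $\widetilde{S}=S^{-1}$ is exactly the two-sheeted covering relation rewritten in frame/rotor language. First I would observe that $S\in\Spin_+(p,q)$ (since $\widetilde{S}S=e$ and conjugation by $S$ sends the generators to a frame spanning $\cl^1_{p,q}$), so there exists a unique matrix $P=(p_a^b)\in\SO_+(p,q)$ with $\beta_a=p_a^b e_b$. Thus the hypothesis of the corollary is just a reformulation of~(\ref{conn}).

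Next I would extend this identity from grade $1$ to arbitrary ordered multi-indices. Exactly as in equation~(\ref{multi}) of the proof of Theorem \ref{thnew}, multiplying $Se_a\widetilde{S}=p_a^b e_b$ by itself and using $\widetilde{S}=S^{-1}$ iteratively gives
\begin{eqnarray}
\beta_A = Se_A\widetilde{S} = p_A^B e_B
\end{eqnarray}
for every ordered multi-index $A$, with the sum over $B$ of the same length. Substituting this into the definition $M_F := \beta_A e_F e^A$ from (\ref{MMFG}) produces precisely the quantity $M_F=p^B_A e_B e_F e^A$ of (\ref{MMF}).

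At this point the existence claim and the formula are both immediate consequences of Theorem \ref{thnew}: existence of an even-grade $e_F$ with $M_F\neq 0$ is the first assertion of that theorem, and the expression $S=\pm M_F/\sqrt{\widetilde{M_F}M_F}$ is its second assertion, now interpreted in terms of the rotor acting on the frame rather than on the generators.

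The only mild subtlety is making sure the extension $\beta_A=p_A^B e_B$ is done with the correct ordering of indices and with the antisymmetrization implicit in the ordered-multi-index Einstein convention; this is the one place where a careless reader might slip, but it is the same bookkeeping already carried out in the proof of Theorem \ref{thnew}, so I would simply invoke it rather than redo the computation. No other obstacle arises.
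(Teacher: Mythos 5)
Your reduction is exactly what the paper intends: the corollary is stated as a reformulation of Theorem \ref{thnew} in rotor/frame language, with $\beta_A=Se_A\widetilde{S}=p_A^Be_B$ turning $M_F=\beta_Ae_Fe^A$ into the quantity (\ref{MMF}), so the existence of an even $e_F$ with $M_F\neq 0$ and the formula (\ref{SSFG}) follow verbatim from that theorem. This matches the paper's (implicit) proof, so there is nothing to add.
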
 

In the particular case of dimension $n=3$, the formula for $M_F$ (\ref{MMF}) can be simplified. We obtain the following theorem.

\begin{theorem}[$n=3$]\label{th34}
Let $P\in\SO_+(p,q)$, $n=p+q=3$. We can always choose a basis element $e_F\in\{e_D\, | \, |D|=0\mod 2\}\in\cl_{p,q}$ such that
\begin{eqnarray}
L_F:=e_F+ p_a^b e_b e_F e^a\neq 0.\label{MMM}
\end{eqnarray}
Next, we can find elements $S=\pm T\in\Spin_+(p,q)$ that correspond to $P=(p_a^b)\in\SO_+(p,q)$ as the two-sheeted covering $Se_aS^{-1}=p_a^b e_b$ in the following way:
\begin{eqnarray}
S=\pm \frac{L_F}{\sqrt{\widetilde{L_F}L_F}}.\label{SSS}
\end{eqnarray}
\end{theorem}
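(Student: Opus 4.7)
The plan is to repeat the strategy of the proof of Theorem \ref{thnew}, while exploiting a dimension-specific simplification in $n=3$ that will collapse the full minor expansion $M_F$ into the much shorter sum $L_F$. Using the covering relation $Se_aS^{-1}=p_a^b e_b$, I would first rewrite
\begin{eqnarray*}
L_F = e_F + p_a^b e_b e_F e^a = e_F + S\sum_{a=1}^n e_a \,U\, e^a,\qquad U:=S^{-1}e_F\in\cl^0_{p,q}\oplus\cl^2_{p,q},
\end{eqnarray*}
reducing the problem to evaluating a single-index sum applied to an even multivector.

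The main technical ingredient I would invoke is the standard grade-projection identity $\sum_{a=1}^n e_a \langle U\rangle_k e^a=(-1)^k(n-2k)\langle U\rangle_k$, which one proves by splitting the sum over indices $a$ that do and do not appear in a given basis multi-index of length $k$. Specializing to $n=3$ and to $U\in\cl^0_{p,q}\oplus\cl^2_{p,q}$ gives $\sum_a e_a U e^a=3\langle U\rangle_0-\langle U\rangle_2$. Combining with $e_F=S\langle U\rangle_0+S\langle U\rangle_2$, the grade-$2$ parts will cancel and I would obtain the crucial identity
\begin{eqnarray*}
L_F = 4\,S\,\langle S^{-1}e_F\rangle_0.
\end{eqnarray*}
This cancellation is the single nontrivial step and is the main obstacle; it is a coincidence of $n=3$, where the coefficient $-(n-4)=-1$ produced by the grade-$2$ contribution exactly compensates the $+1$ arising from expanding $e_F$ itself, a balance which fails for $n\ge 4$.

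From here the remainder of the argument will mirror the end of the proof of Theorem \ref{thnew}. Writing $\lambda_F:=\langle S^{-1}e_F\rangle_0\in\R$, reversion gives $\widetilde{L_F}=4\lambda_F\widetilde{S}$, and using $\widetilde{S}S=e$ yields $\widetilde{L_F}L_F=16\lambda_F^2$, so $\sqrt{\widetilde{L_F}L_F}=4|\lambda_F|$ and $\pm L_F/\sqrt{\widetilde{L_F}L_F}=\pm S$, as claimed. Finally, the existence of an even-grade $e_F$ with $L_F\neq 0$ will follow from the corresponding statement for $M_F$ in Theorem \ref{thnew}, since the same computation also yields $M_F=8S\lambda_F=2L_F$, so that $L_F\neq 0$ if and only if $M_F\neq 0$.
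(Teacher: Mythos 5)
Your proposal is correct, and its key step is argued differently from the paper. The paper obtains Theorem \ref{th34} by specializing the minor expansion of Theorem \ref{thnew}: it writes out $M_F=p^B_Ae_Be_Fe^A$ for $n=3$ grade by grade and uses $e_{123}\in\cen(\cl_{p,q})$, $p^{123}_{123}=\det(P)=1$, and the complementary-minor (cofactor) identities such as $p^1_1=p^{23}_{23}$ to conclude $M_F=2(e_F+p_a^be_be_Fe^a)=2L_F$, after which the normalization (\ref{SSF}) finishes the argument. You instead bypass the minors entirely: using the covering relation you reduce $L_F$ to $e_F+S\sum_a e_a(S^{-1}e_F)e^a$ and apply the single-index averaging identity $\sum_a e_a\langle U\rangle_k e^a=(-1)^k(n-2k)\langle U\rangle_k$, whose $n=3$ values $3$ and $-1$ on grades $0$ and $2$ produce the cancellation $L_F=4S\langle S^{-1}e_F\rangle_0$; comparison with $M_F=2^3S\langle S^{-1}e_F\rangle_0$ from the proof of Theorem \ref{thnew} then gives $M_F=2L_F$, which transfers the existence of a suitable even $e_F$ and yields the same normalization. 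Your route is the exact grade-$1$ analogue of the full-basis averaging identity $e_AUe^A=2^n\langle U\rangle_{\cen}$ used in Theorem \ref{thnew}; it explains structurally why the shortcut is special to $n=3$ (the grade-$2$ coefficient $n-4=-1$ exactly offsets the $e_F$ term, which fails for $n\geq 4$) and avoids any bookkeeping with cofactor identities of pseudo-orthogonal matrices, while the paper's computation establishes $M_F=2L_F$ as a matrix-level identity directly in terms of $P$. Both arguments implicitly use that elements of $\Spin_+(p,q)$ covering $\SO_+(p,q)$ are even, so you are on the same footing as the paper there.
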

\begin{proof} In the case $n=3$, from (\ref{MMF}), we get
\begin{eqnarray}
M_F&=&p^B_A e_B e_F e^A=e_F+p_a^b e_be_F e^a+p_{a_1 a_2}^{b_1 b_2}  e_{b_1 b_2}e_F e^{a_1 a_2}+p_{123}^{123}e_{123}e_F e^{123}\\
&=&2(e_F+p_a^b e_b e_F e^a),
\end{eqnarray}
where we used $e_{123}\in\cen(\cl_{p,q})$, $p^{12}_{12}=\det(P)=1$, $p^1_1=p^{23}_{23}$, $p^2_2=p^{13}_{13}$, and $p^3_3=p^{12}_{12}$. Because of (\ref{SSF}), we can normalize $M_F$ and obtain the statement of the theorem.
\end{proof}

Note that in the particular case $F=\emptyset$, the statement of Theorem \ref{th34} is presented in \cite{Doran} with a different proof. This reference does not provide any information about what we can do if $L_\emptyset=0$. Theorem \ref{th34} gives the answer for the general case.

Note that $\SO_+(p,q)\simeq \SO_+(q,p)$ and $\Spin_+(p,q)\simeq \Spin_+(q,p)$. Below we consider only the cases $p\geq q$.

\begin{example}
In the case $n=1$, we have 
$$
\SO(1)=\{ 1\},\qquad \Spin(1)=\{ \pm e\}.
$$
For the element $P=1\in\SO(1)$, there are two elements $S=\pm T=\pm e\in\Spin(1)$. In Theorem \ref{thnew}, we have $F=\emptyset$ and $M_\emptyset=e$ for this case.
\end{example}

\begin{example}\label{ex5} In the case $n=p=2$ and $q=0$, we have
\begin{eqnarray}
\SO(2)&=&\{
\begin{bmatrix}
\cos \phi \,&\, -\sin\phi \\
\sin\phi \,&\, \cos\phi \\
\end{bmatrix}\,|\, \phi\in\R 
\},\\
\Spin(2)&=&\{ ae+be_{12}\in\cl_{2} \,|\,  a^2+b^2=1,\, a, b\in\R\}.
\end{eqnarray}
From (\ref{MMF}), we get (with substitution $p^{12}_{12}=\det(P)=1$)
\begin{eqnarray}
\!\!\!\!\!\!M_F\!=\!p^B_Ae_B e_Fe^A\!=\!e_F+p^1_1e_Fe^1+p^1_2 e_1 e_Fe^2+p^2_1 e_2 e_Fe^1+p^2_2 e_2 e_F e^2+e_{12}e_Fe^{12}.\label{qwe}
\end{eqnarray}
Choosing $F=\emptyset$, we get
\begin{eqnarray}
M_\emptyset&=&(2+p^1_1+p^2_2)e+(p^1_2-p^2_1)e_{12}=2(1+\cos\phi)e-2\sin\phi\, e_{12},\\
\widetilde{M_\emptyset}M_\emptyset&=&(2(1+\cos\phi)e+2\sin\phi\, e_{12})(2(1+\cos\phi)e-2\sin\phi\, e_{12})\\
&=&8(1+\cos\phi)e=16 \cos^2 \frac{\phi}{2} \,e.
\end{eqnarray}
For the case $\phi\neq \pi+2\pi k, k\in\Z$, we get
\begin{eqnarray}
S=\pm \frac{M_\emptyset}{\sqrt{\widetilde{M_\emptyset} M_\emptyset}}=\pm \frac{4 \cos^2 \frac{\phi}{2}\, e-4\sin\frac{\phi}{2} \cos\frac{\phi}{2}\, e_{12}}{4\cos \frac{\phi}{2}}=\pm (\cos\frac{\phi}{2}\, e-\sin\frac{\phi}{2}\, e_{12}).\label{qw}
\end{eqnarray}
Choosing $F=12$, we get
\begin{eqnarray}
M_{12}&=&(p^1_2-p^2_1)e+(2-p^1_1-p^2_2)e_{12}=-2\sin\phi\,e+2(1-\cos\phi)e_{12},\\
\widetilde{M_{12}}M_{12}&=&(-2\sin\phi\,e-2(1-\cos\phi) e_{12})(-2\sin\phi\,e+2(1-\cos\phi) e_{12})\\
&=&8(1-\cos\phi)e=16 \sin^2 \frac{\phi}{2} \,e.
\end{eqnarray}
For the case $\phi\neq 2\pi k, k\in\Z$, we obtain the same result as in (\ref{qw}):
\begin{eqnarray}
\!\!\!S=\pm \frac{M_{12}}{\sqrt{\widetilde{M_{12}} M_{12}}}=\pm \frac{4\sin\frac{\phi}{2} \cos\frac{\phi}{2}\,e+4 \sin^2 \frac{\phi}{2}\, e_{12}}{4\cos \frac{\phi}{2}}=\pm (-\cos\frac{\phi}{2}\, e+\sin\frac{\phi}{2}\, e_{12}).
\end{eqnarray}
Note that we can also check directly for arbitrary $\phi\in\R$:
\begin{eqnarray}
&&S(xe_1+ye_2)S^{-1}=((\cos\frac{\phi}{2}) e-(\sin\frac{\phi}{2}) e_{12})(xe_1+ye_2)((\cos\frac{\phi}{2})+(\sin\frac{\phi}{2}) e_{12})\nonumber\\
&&=e_1((\cos \phi) x+(\sin \phi) y)+e_2 ((\cos\phi) y-(\sin \phi) x),\qquad x,y\in\R,
\end{eqnarray} 
and get the same result
\begin{eqnarray}
Se_a S^{-1}=p_a^b e_b,\qquad S=\pm T= \pm(\cos\frac{\phi}{2}\,e-\sin\frac{\phi}{2}\,e_{12}).
\end{eqnarray}
\end{example}

\begin{example}\label{ex6} In the case $p=q=1$, we have
\begin{eqnarray}
\SO_+(1,1)&=&\{
\begin{bmatrix}
\cosh \phi \,&\, \sinh\phi \\
\sinh\phi \,&\, \cosh\phi \\
\end{bmatrix}\,|\, \phi\in\R 
\},\\
\Spin_+(1, 1)&=&\{ ae+be_{12}\in\cl_{1,1} \,|\,  a^2-b^2=1,\, a, b\in\R\}.
\end{eqnarray}
From (\ref{MMF}), we again obtain (\ref{qwe}). Choosing $F=\emptyset$, we get
\begin{eqnarray}
M_\emptyset&=&(2+p^1_1+p^2_2)e-(p^1_2+p^2_1)e_{12}=2(1+\cosh\phi)e-2\sinh\phi\, e_{12},\\
\widetilde{M_\emptyset}M_\emptyset&=&(2(1+\cosh\phi)e+2\sinh\phi\, e_{12})(2(1+\cosh\phi)e-2\sinh\phi\, e_{12})\\
&=&8(1+\cosh\phi)e=16 \cosh^2 \frac{\phi}{2} \,e.
\end{eqnarray}
For arbitrary $\phi\in\R$, we have $M_\emptyset\neq 0$ and get
\begin{eqnarray}
\!\!\!\!S\!=\!\pm \frac{M_\emptyset}{\sqrt{\widetilde{M_\emptyset} M_\emptyset}}\!=\!\pm \frac{4 \cosh^2 \frac{\phi}{2}\, e-4\sinh\frac{\phi}{2} \cosh\frac{\phi}{2}\, e_{12}}{4\cosh \frac{\phi}{2}}\!=\!\pm (\cosh\frac{\phi}{2}\, e-\sinh\frac{\phi}{2}\, e_{12}).\label{qww}
\end{eqnarray}
Note that we can also check directly for arbitrary $\phi\in\R$:
\begin{eqnarray}
&&\!\!\!S(xe_1+ye_2)S^{-1}=((\cosh\frac{\phi}{2}) e-(\sinh\frac{\phi}{2}) e_{12})(xe_1+ye_2)((\cosh\frac{\phi}{2}) e+(\sinh \frac{\phi}{2}) e_{12})\nonumber\\
&&\!\!\!=e_1((\cosh \phi) x+(\sinh \phi) y)+e_2 ((\sinh\phi) x+(\cosh \phi) y,\qquad x, y\in\R,
\end{eqnarray}
and get the same result
\begin{eqnarray}
Se_a S^{-1}=p_a^b e_b,\qquad S=\pm T= \pm(\cosh(\frac{\phi}{2})e-\sinh(\frac{\phi}{2})e_{12}).
\end{eqnarray}
\end{example}

\begin{example}\label{ex7} In the case $n=p=3$ and $q=0$, we have 
\begin{eqnarray}
\!\!\!\!\!\!\!\!\!\SO(3)&=&\{P\in\Mat(3,\R) \,| \, P^\T T=I_3,\quad \det(P)=1\},\\
\!\!\!\!\!\!\!\!\!\!\Spin(3)&=&\{ ae+be_{12}+c e_{13}+d e_{23}\in\cl_{3}\,|\, a^2+b^2+c^2+d^2=1,\, a, b, c, d\in\R\}.
\end{eqnarray}
From (\ref{MMM}), we get for $F=\emptyset, 12, 13$, and $23$ respectively:
\begin{eqnarray}
L_\emptyset&=&(1+p^1_1+p^2_2+p^3_3)e+(p^1_2-p^2_1)e_{12}+(p^1_3-p^3_1)e_{13}+(p^2_3-p^3_2)e_{23},\nonumber\\
L_{12}&=&(p^1_2-p^2_1)e+(1-p^1_1-p^2_2+p^3_3)e_{12}+(-p^2_3-p^3_2)e_{13}+(p^1_3+p^3_1)e_{23},\label{fr}\\
L_{13}&=&(p^1_3-p^3_1)e+(-p^2_3-p^3_2)e_{12}+(1-p^1_1+p^2_2-p^3_3)e_{13}+(-p^1_2-p^2_1)e_{23},\nonumber\\
L_{23}&=&(p^2_3-p^3_2)e+(p^1_3+p^3_1)e_{12}+(-p^1_2-p^2_1)e_{13}+(1+p^1_1-p^2_2-p^3_3)e_{23}.\nonumber
\end{eqnarray}
Note that, for example, $L_\emptyset=L_{12}=0$ if and only if the matrix $P\in\SO(3)$ has the following form
$$\begin{bmatrix}
\cos \phi \,&\, \sin\phi \,&\, 0\\
\sin\phi \,&\, -\cos\phi \,&\, 0\\
0\,&\,0\,&\,-1
\end{bmatrix},\qquad \phi\in\R.
$$
For the matrices of this type, we have
$$
L_{13}=2(1-\cos\phi)e_1-2\sin\phi\, e_{23},\qquad L_{23}=-2\sin\phi\, e_{13}+2(1+\cos\phi)e_{23},
$$
which are not equal to zero at the same time (the first one is equal to zero for $\phi=2\pi k$, $k\in\Z$; the second one is equal to zero for $\phi=\pi+2\pi k$, $k\in\Z$). So, for example, for the matrix $P=\diag(1, -1, -1)\in\SO(3)$, we have $L_\emptyset=L_{12}=L_{13}=0$ and we need the last one $L_{23}=4e_{23}\neq 0$ for calculations using formula~(\ref{SSS}).
\end{example}

\begin{example}\label{ex8} In the case $p=2$ and $q=1$, we have 
\begin{eqnarray}
\SO_+(2,1)&=&\{P\in\Mat(3,\R)\, | \, P^\T \eta P=\eta,\, \det(P)=1,\, p^{12}_{12}\geq 1\},\\
\Spin_+(2,1)&=&\{ ae+be_{12}+c e_{13}+d e_{23}\in\cl_{2,1}\,|\, a^2+b^2-c^2-d^2=1,\, a, b, c, d\in\R\},\nonumber
\end{eqnarray}
where $\eta=\diag(1, 1, -1)$. From (\ref{MMM}), we get for $F=\emptyset, 12, 13$, and $23$ respectively:
\begin{eqnarray}
L_\emptyset&=&(1+p^1_1+p^2_2+p^3_3)e+(p^1_2-p^2_1)e_{12}+(-p^1_3-p^3_1)e_{13}+(-p^2_3-p^3_2)e_{23},\nonumber\\
L_{12}&=&(p^1_2-p^2_1)e+(1-p^1_1-p^2_2+p^3_3)e_{12}+(p^2_3-p^3_2)e_{13}+(-p^1_3+p^3_1)e_{23},\label{fr2}\\
L_{13}&=&(p^1_3+p^3_1)e+(-p^2_3+p^3_2)e_{12}+(1-p^1_1+p^2_2-p^3_3)e_{13}+(-p^1_2-p^2_1)e_{23},\nonumber\\
L_{23}&=&(p^2_3+p^3_2)e+(p^1_3-p^3_1)e_{12}+(-p^1_2-p^2_1)e_{13}+(1+p^1_1-p^2_2-p^3_3)e_{23}.\nonumber
\end{eqnarray}
These four expression are not equal to zero at the same time. Next, we use the formula (\ref{SSS}).
\end{example}

\section{Formalism of Matrices, Quaternions, and Split-quaternions}\label{sectionMatrix}

It is known that the spin groups $\Spin_+(p,q)$ are isomorphic to classical matrix Lie groups in the cases of small dimensions $n=p+q\leq 6$ (see, for example, \cite{Lounesto}). An important problem arises: how to reformulate the known methods for calculating elements of spin groups without using the Clifford algebra formalism. We give a complete answer to this problem in this section for the cases $n\leq 3$.

In this section, we use the well-known Pauli matrices
\begin{eqnarray}
\sigma_0=\begin{bmatrix} 1\,&\, 0\\ 0\,&\,1 \end{bmatrix},\qquad \sigma_1=\begin{bmatrix} 0\,&\, 1\\ 1\,&\, 0 \end{bmatrix},\qquad \sigma_2=\begin{bmatrix} 0\,&\, -i\\ i\,&\, 0 \end{bmatrix},\qquad \sigma_3=\begin{bmatrix} 1\,&\, 0\\ 0\,&\, -1 \end{bmatrix}\label{Pauli}
\end{eqnarray}
with the properties
$$
(\sigma_1)^2=(\sigma_2)^2=(\sigma_3)^2=-i\sigma_1\sigma_2\sigma_3=\sigma_0,\qquad \sigma_a \sigma_b=-\sigma_b \sigma_a,\, a\neq b,\, a, b=1, 2, 3.
$$

\begin{example}
In the case $n=1$, we have 
$$
\SO(1)=\{ 1\},\qquad \Spin(1)=\{ \pm e\}\simeq \OO(1).
$$
For the element $P=1\in\SO(1)$, there are two elements $X=\pm Y=\pm 1\in\OO(1)$.
\end{example}

\begin{example}
For the matrix $P=\begin{bmatrix}
\cos \phi \,&\, -\sin\phi \\
\sin\phi \,&\, \cos\phi \\
\end{bmatrix}\in\SO(2)$, we have two elements 
$$X=\pm Y=\pm (\cos(\frac{\phi}{2})+i\sin(\frac{\phi}{2}))=\pm \exp({\frac{i\phi}{2}})\in\U(1)$$
with respect to the two-sheeted covering of $\SO(2)$ by $\U(1)$. 

See the calculations in Example \ref{ex5}. We have
$$
\Spin(2)=\{ ae+be_{12}\in\cl_{2} \,|\,  a^2+b^2=1,\, a, b\in\R\}\simeq\U(1)=\{\exp({i \varphi})\,|\, \varphi\in\R\}
$$
because $e\equiv 1$, $(e_{12})^2=-e$ in $\cl_{2}$, and we can identify $e_{12}$ with the imaginary unit~$i$. To obtain the matrix form of the connection (\ref{conn}), we can use the matrix representation of $\cl_2$ with $e \to \sigma_0$, $e_1 \to \sigma_1$, $e_2 \to \sigma_2$, and $e_{12}\to \sigma_1 \sigma_2=i \sigma_3$:
\begin{eqnarray}
X \sigma_a X^{-1}=p^b_a \sigma_b,\qquad a, b=1, 2;\qquad X=\pm (\cos(\frac{\phi}{2})\sigma_0+i\sin(\frac{\phi}{2})\sigma_3).
\end{eqnarray}
\end{example}

\begin{example}
For the matrix $\begin{bmatrix}
\cosh \phi \,&\, \sinh\phi \\
\sinh\phi \,&\, \cosh\phi \\
\end{bmatrix}\in\SO_+(1,1)$, we have two elements 
$$X=\pm Y=\pm (\cosh(\frac{\phi}{2})+\sinh(\frac{\phi}{2}))\in\GL(1,\R)$$
with respect to the two-sheeted covering of $\SO_+(1,1)$ by $\GL(1,\R)$. 

See the calculations in Example \ref{ex6}. We have the isomorphism
\begin{eqnarray}
\Spin_+(1,1)\simeq\GL(1,\R)=\R\setminus\{0\}=\{a+b\,|\, a^2-b^2=1,\, a,b\in\R\}
\end{eqnarray}
because $(e_{12})^2=e$ in $\cl_{1,1}$. To obtain the matrix form of the connection (\ref{conn}), we can use the matrix representation of $\cl_{1,1}$ with $e \to \sigma_0$, $e_1 \to \sigma_1$, $e_2 \to i\sigma_2$, and $e_{12}\to i\sigma_1 \sigma_2=-\sigma_3$.
\end{example}

\begin{theorem}\label{th30} Let $P=(p_a^b)\in\SO(3)$. Then at least one of the following four quaternions $Q_F$, $F\in\{\emptyset, 12, 13, 23\}$, is non-zero:
\begin{eqnarray}
Q_\emptyset&:=&(1+p^1_1+p^2_2+p^3_3)+(p^1_2-p^2_1)i+(p^1_3-p^3_1)j+(-p^2_3+p^3_2)k,\nonumber\\
Q_{12}&:=&(p^1_2-p^2_1)+(1-p^1_1-p^2_2+p^3_3)i+(-p^2_3-p^3_2)j+(-p^1_3-p^3_1)k,\label{fr3}\\
Q_{13}&:=&(p^1_3-p^3_1)+(-p^2_3-p^3_2)i+(1-p^1_1+p^2_2-p^3_3)j+(p^1_2+p^2_1)k,\nonumber\\
Q_{23}&:=&(p^2_3-p^3_2)+(p^1_3+p^3_1)i+(-p^1_2-p^2_1)j+(-1-p^1_1+p^2_2+p^3_3)k.\nonumber
\end{eqnarray}
Using this nonzero $Q_F\neq 0$, we can find two unit quaternions $X=\pm Y\in\H^u\simeq\SU(2)\simeq\Spin(3)$, which correspond to $P\in\SO(3)$ as the two-sheeted covering in the following way:
\begin{eqnarray}
X=\pm \frac{Q_F}{\sqrt{\bar{Q}_F Q_F}}\in\H^u,
\end{eqnarray}
where $\bar{Q}_F$ is the usual conjugation of quaternions.
\end{theorem}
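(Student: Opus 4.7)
My plan is to transport Theorem \ref{th34}, specialised to the Euclidean signature $(p,q)=(3,0)$, through the classical isomorphism $\cl_3^{(0)}\simeq\H$ between the even subalgebra of $\cl_3$ and Hamilton's quaternions.

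First, I would fix the identification
\[
e\longmapsto 1,\qquad e_{12}\longmapsto i,\qquad e_{13}\longmapsto j,\qquad e_{23}\longmapsto -k,
\]
and verify that it is an algebra isomorphism. A direct computation in $\cl_3$, using $(e_{ab})^2=-e$ together with $e_{12}e_{13}=-e_{23}$, confirms the defining relations $i^2=j^2=k^2=-1$ and $ij=k$. Under this identification, reversion on $\cl_3^{(0)}$ fixes $e$ and negates every $e_{ab}$, so it corresponds exactly to quaternion conjugation. Hence for $U\in\cl_3^{(0)}$ the scalar $\widetilde{U}U$ lies in $\R_{\geq 0}e$ and transports to $\bar q q = |q|^2\geq 0$, and the subgroup $\Spin(3)\subset\cl_3^{(0)}$ is mapped bijectively onto the unit quaternions $\H^u\simeq\SU(2)$.

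Second, I would check that each of the four quaternions $Q_F$ defined in (\ref{fr3}) is precisely the image of the corresponding element $L_F$ from (\ref{fr}) under the above map. This reduces to a coefficient-by-coefficient inspection in which the sign flips on the $k$-components of (\ref{fr3}) relative to the $e_{23}$-components of (\ref{fr}) are explained by $e_{23}\mapsto -k$; no nontrivial computation is required.

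With these two ingredients in place, the theorem is immediate. Theorem \ref{th34} applied to $(p,q)=(3,0)$ asserts that at least one $L_F\neq 0$ and that $S=\pm L_F/\sqrt{\widetilde{L_F}L_F}$ realises both preimages of $P$ under the covering; transporting through the isomorphism, at least one $Q_F\neq 0$ and $X=\pm Q_F/\sqrt{\bar Q_F Q_F}\in\H^u$ realises the corresponding pair of unit quaternions. The main obstacle is essentially bookkeeping, namely matching the sign conventions of the chosen quaternion identification against the specific form of the $L_F$ listed in Example \ref{ex7}; once the conventions are pinned down, the result follows at once from Theorem \ref{th34}.
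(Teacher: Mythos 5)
Your proposal is correct and is essentially the paper's own proof: the paper likewise uses the identification $e\to 1$, $e_{12}\to i$, $e_{13}\to j$, $e_{23}\to -k$ to get $\Spin(3)\simeq\H^u\simeq\SU(2)$ and then reduces the claim to Theorem \ref{th34} together with the expressions for $L_F$ in Example \ref{ex7}. Your coefficient check confirming $Q_F$ as the image of $L_F$ (with the sign flips on the $k$-components coming from $e_{23}\mapsto -k$) is exactly the bookkeeping the paper leaves implicit.
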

\begin{proof} 
Using the correspondence $e\to 1$, $e_{12} \to i$, $e_{13}\to j$, and $e_{23} \to -k$ (where $i^2=j^2=k^2=ijk=-1$), we obtain the isomorphism with the group of unit quaternions
$$
\Spin(3)\simeq \H^u:=\{q=a+bi+cj+dk\in\H\, | \, \bar{q}q:=a^2+b^2+c^2+d^2=1\},
$$
which is also isomorphic to the group
$$
\SU(2):=\{P\in\Mat(2,\C)\,|\, P^\HH P=I_2\},
$$
where $\HH$ is the Hermitian transpose, with the correspondence
$$a\sigma_0+bi\sigma_3+ci\sigma_2+di\sigma_1=\begin{bmatrix} a+bi\,&\, c+di\\ -c+di\,&\, a-bi \end{bmatrix}\in\SU(2)\to a+bi+cj+dk\in\H^u.
$$
Further see Theorem \ref{th34} and Example \ref{ex7}.
\end{proof}

\begin{theorem}\label{th21} Let $P=(p_a^b)\in\SO_+(2,1)$. Then at lest one of the following four split-quaternions $Q_F$, $F\in\{\emptyset, 12, 13, 23\}$ is non-zero:
\begin{eqnarray}
Q_\emptyset&:=&(1+p^1_1+p^2_2+p^3_3)+(p^1_2-p^2_1)i+(-p^1_3-p^3_1)j+(p^2_3+p^3_2)k,\nonumber\\
Q_{12}&:=&(p^1_2-p^2_1)+(1-p^1_1-p^2_2+p^3_3)i+(p^2_3-p^3_2)j+(p^1_3-p^3_1)k,\label{fr4}\\
Q_{13}&:=&(p^1_3+p^3_1)+(-p^2_3+p^3_2)i+(1-p^1_1+p^2_2-p^3_3)j+(p^1_2+p^2_1)k,\nonumber\\
Q_{23}&:=&(p^2_3+p^3_2)+(p^1_3-p^3_1)i+(-p^1_2-p^2_1)j+(-1-p^1_1+p^2_2+p^3_3)k.\nonumber
\end{eqnarray}
Using this nonzero $Q_F\neq 0$, we can find two unit split-quaternions $X=\pm Y\in\H^u_s\simeq \SU(1,1)\simeq\Spin_+(2,1)$, which correspond to $P\in\SO_+(2,1)$ as the two-sheeted covering in the following way:
\begin{eqnarray}
X=\pm \frac{Q_F}{\sqrt{\bar{Q}_FQ_F}}\in\H^u_s,
\end{eqnarray}
where $\bar{Q}_F$ is the usual conjugation of split-quaternions.
\end{theorem}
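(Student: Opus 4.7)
The plan is to reduce Theorem \ref{th21} to Theorem \ref{th34} (in the form already carried out in Example \ref{ex8}) by transporting the Clifford-algebra result through an explicit algebra isomorphism between the even subalgebra $\cl^{(0)}_{2,1}$ and the split-quaternions $\H_s$. This mirrors exactly the strategy used for Theorem \ref{th30}, the only change being that $\H$ is replaced by $\H_s$ because the relevant bivectors in $\cl_{2,1}$ have mixed signatures.

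First I would pin down the correspondence $e \leftrightarrow 1$, $e_{12} \leftrightarrow i$, $e_{13} \leftrightarrow j$, $e_{23} \leftrightarrow -k$, where the split-quaternion units satisfy $i^2 = -1$, $j^2 = k^2 = 1$, $ij = k$. The square relations $(e_{12})^2 = -e$, $(e_{13})^2 = (e_{23})^2 = e$ and a short check on mixed products (for instance $e_{12}e_{13} = -e_{23}$ on the Clifford side versus $ij = k$ on the split-quaternion side, which agree precisely because $e_{23} \leftrightarrow -k$) show that this is a genuine isomorphism of associative algebras. Under it, the reversion on $\cl^{(0)}_{2,1}$ becomes ordinary split-quaternion conjugation, so the spinor condition $\widetilde{T}T = e$ translates into the unit split-quaternion condition $\bar{q}q = a^2 + b^2 - c^2 - d^2 = 1$, giving $\Spin_+(2,1) \simeq \H^u_s$. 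The further identification with $\SU(1,1)$ comes from the matrix representation $e_1 \to \sigma_1$, $e_2 \to \sigma_2$, $e_3 \to i\sigma_3$ restricted to the even subalgebra.

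Next I would transport the four elements $L_F$ computed in Example \ref{ex8} through this isomorphism. Applying the map coefficient by coefficient to \eqref{fr2} yields exactly the expressions \eqref{fr4}, so $Q_F$ is the image of $L_F$. The nonvanishing claim (at least one $Q_F \neq 0$) is then the image of the corresponding claim already established for $L_F$ via Theorem \ref{th34}. Finally the formula \eqref{SSS} transports to $X = \pm Q_F / \sqrt{\bar{Q}_F Q_F}$, since $\widetilde{L_F} L_F$ maps to $\bar{Q}_F Q_F$ and the latter is a real scalar on whose positivity the square root is well defined.

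The only substantive point requiring care is the sign bookkeeping in the identification of $\cl^{(0)}_{2,1}$ with $\H_s$: the choice $e_{23} \leftrightarrow -k$ is forced by the multiplicative relation $ij = k$ and is what produces the sign differences visible between \eqref{fr2} and \eqref{fr4}. A single consistency check on one scalar coefficient and one vector coefficient of each $Q_F$ suffices to confirm the translation, after which the remainder of the proof is automatic from Theorem \ref{th34}.
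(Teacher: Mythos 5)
Your proposal is correct and follows essentially the same route as the paper: both establish the isomorphism $\Spin_+(2,1)\simeq\H^u_s$ via the correspondence $e\to 1$, $e_{12}\to i$, $e_{13}\to j$, $e_{23}\to -k$ (with reversion becoming split-quaternion conjugation) and then transport the elements $L_F$ of Example \ref{ex8} and the formula of Theorem \ref{th34} to obtain the $Q_F$ and the normalization $X=\pm Q_F/\sqrt{\bar{Q}_FQ_F}$. The only cosmetic difference is your choice of matrix representation realizing $\H^u_s\simeq\SU(1,1)$, which is inessential to the argument.
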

\begin{proof} 
Using the correspondence $e\to 1$, $e_{12} \to i$, $e_{13}\to j$, and $e_{23}\to -k$  (where $i^2=-j^2=-k^2=-ijk=-1$), we obtain the isomorphism with the group of unit split-quaternions (we define the split-quaternion conjugation as $\bar{q}:=a-bi-cj-dk$)
$$
\Spin_+(2,1)\simeq \H^u_s:=\{q=a+bi+cj+dk\in \H_s\, | \, \bar{q} q:=a^2+b^2-c^2-d^2=1\},
$$
which is also isomorphic to the group
$$
\SU(1,1):=\{P\in\Mat(2, \C)\, | \, P^\HH \eta P=\eta\},\qquad \eta=\diag(1,-1),
$$
with the correspondence
$$
 a\sigma_0+bi\sigma_3+c\sigma_1-d\sigma_2=\begin{bmatrix} a+bi\,&\, c+di\\ c-di\,&\, a-bi \end{bmatrix}\in\SU(1,1)\to a+bi+cj+dk\in \H^u_s.
$$
Further see Theorem \ref{th34} and Example \ref{ex8}.
\end{proof}

\section{Conclusions}\label{Conclusions}

In this paper, we present a method for calculation of spin groups elements for known elements of pseudo-orthogonal groups with respect to the corresponding two-sheeted coverings. The method works in the case of arbitrary dimension and signature using the Clifford algebra formalism. We present explicit formulas in the cases of all signatures $(p,q)$ of space of dimension $n=p+q\leq 3$ in the matrix, quaternion, and split-quaternion formalisms. These cases are the most important for applications. Theorems \ref{thnew}, \ref{th34}, \ref{th30}, and \ref{th21} are presented for the first time. In this paper, we stop at the case $n=3$ due to the 10-page limitation of the paper. The cases $n=4$, $5$, and $6$, which are also important for applications and can also be reformulated purely in the matrix formalism, will be considered in the next paper.

\section*{Acknowledgments}

The author is grateful to the organizers and participants of The 33rd/35th International Colloquium on Group Theoretical Methods in Physics (ICGTMP, Group33/35, Cotonou, Benin, July 15--19, 2024) for fruitful discussions. 

The article was prepared within the framework of the project “Mirror Laboratories” HSE University “Quaternions, geometric algebras and applications”.

\end{document}